\newcommand{\SarielComp}[1]{}
\newcommand{\NotSarielComp}[1]{#1}%
\newcommand{\SarielComp}[1]{#1}%
\newcommand{\NotSarielComp}[1]{}%
\newcommand{\IfPrinterVer}[2]{#2}%
\theoremstyle{plain}%
\newtheorem{theorem}{Theorem}[section]
\newtheorem{lemma}[theorem]{Lemma}
\newtheorem{observation}[theorem]{Observation}
\theoremstyle{plain}%
\newtheorem*{remark:unnumbered}[theorem]{Remark}%
\newtheorem{definition}[theorem]{Definition}
\newtheorem{problem}[theorem]{Problem}
\newcommand{\myqedsymbol}{\rule{2mm}{2mm}}
\theoremstyle{nonumberplain}%
\newtheorem{proof}{Proof:}%
\definecolor{blue25emph}{rgb}{0, 0, 11}
\providecommand{\emphic}[2]{%
   \textcolor{blue25emph}{%
      \textbf{\emph{#1}}}%
   \index{#2}}
\providecommand{\emphi}[1]{\emphic{#1}{#1}}
\definecolor{almostblack}{rgb}{0, 0, 0.3}
\providecommand{\emphw}[1]{{\textcolor{almostblack}{\emph{#1}}}}%
\newcommand{\atgen}{\symbol{'100}}
\newcommand{\SarielThanks}[1]{\thanks{Department of Computer Science;
      University of Illinois; 201 N. Goodwin Avenue; Urbana, IL,
      61801, USA; {\tt sariel\atgen{}illinois.edu}; {\tt
         \url{http://sarielhp.org/}.} #1}}
\newcommand{\HLink}[2]{\hyperref[#2]{#1~\ref*{#2}}}
\newcommand{\HLinkSuffix}[3]{\hyperref[#2]{#1\ref*{#2}{#3}}}
\newcommand{\thmlab}[1]{{\label{theo:#1}}}
\newcommand{\thmref}[1]{\HLink{Theorem}{theo:#1}}
\newcommand{\lemlab}[1]{\label{lemma:#1}}
\newcommand{\lemref}[1]{\HLink{Lemma}{lemma:#1}}%
\newcommand{\problab}[1]{\label{prob:#1}}%
\newcommand{\probref}[1]{\HLink{Problem}{prob:#1}}%
\providecommand{\eqlab}[1]{}%
\renewcommand{\eqlab}[1]{\label{equation:#1}}
\newcommand{\remove}[1]{}%
\newcommand{\brc}[1]{\left\{ {#1} \right\}}
\newcommand{\cardin}[1]{\left| {#1} \right|}%
\newlist{compactenumA}{enumerate}{5}%
\setlist[compactenumA]{topsep=0pt,itemsep=-1ex,partopsep=1ex,parsep=1ex,%
   label=(\Alph*)}%
\newlist{compactenuma}{enumerate}{5}%
\setlist[compactenuma]{topsep=0pt,itemsep=-1ex,partopsep=1ex,parsep=1ex,%
   label=(\alph*)}%
\newlist{compactenumI}{enumerate}{5}%
\setlist[compactenumI]{topsep=0pt,itemsep=-1ex,partopsep=1ex,parsep=1ex,%
   label=(\Roman*)}%
\newlist{compactenumi}{enumerate}{5}%
\setlist[compactenumi]{topsep=0pt,itemsep=-1ex,partopsep=1ex,parsep=1ex,%
   label=(\roman*)}%
\newlist{compactitem}{itemize}{5}%
\setlist[compactitem]{topsep=0pt,itemsep=-1ex,partopsep=1ex,parsep=1ex,%
   label=\bullet}%
\DeclareMathOperator{\avg}{avg}
\newcommand{\MetricSpace}{\mathcal{X}}
\newcommand{\MetricDist}{d_{\MetricSpace{}}}
\newcommand{\hide}[1]{}
\newcommand{\SaveContent}[2]{%
   \expandafter\newcommand{#1}{#2}%
}
\newcommand{\RestatementOf}[2]{
   \noindent%
   \textbf{Restatement of #1.}
   {\em #2{}}%
}
\newcommand{\EliotThanks}[1]{%
   \thanks{%
      Department of Computer Science; University of Illinois; 201
      N. Goodwin Avenue; Urbana, IL, 61801, USA; {\tt
         erobson2\atgen{}illinois.edu}; {\tt
         \url{https://eliotwrobson.github.io/}.} #1}}\providecommand{\etal}{et~al.\xspace}
\renewcommand{\etal}{et~al.\xspace}
\title{Improving the average dilation of a metric graph by adding edges}
\author{
Sariel Har-Peled
\SarielThanks{}
\and
Eliot W. Robson%
\EliotThanks{}
}
\date{\today}
\begin{document}
\maketitle

\begin{abstract}
    For a graph \(G\) spanning a metric space, the \emphi{dilation} of
    a pair of points is the ratio of their distance in the shortest
    path graph metric to their distance in the metric space. Given a
    graph \(G\) and a budget \(k\), a classic problem is to augment
    \(G\) with \(k\) additional edges to reduce the maximum dilation
    \cite{fgg-fbsgn-05, DBLP:conf/soda/GudmundssonW21}.

    In this note, we consider a variant of this problem where the
    goal is to reduce the \emphw{average} dilation for pairs of points in
    $G$. We provide an $O(k)$ approximation algorithm for this
    problem, matching the approximation ratio given by
    \cite{DBLP:conf/soda/GudmundssonW21} for the maximum dilation
    variant.
\end{abstract}

\section{Introduction}

We consider a weighted, undirected graph $G = (V,E)$ and a metric
space $\MetricSpace = (X, \MetricDist)$ such that $V = X$, so both
$G$ and $\MetricSpace$ are defined over the same ground set.
For two points
$u,v \in X$, $\MetricDist (u,v)$ is the distance between $u$ and $v$ as
measured in the metric space $\MetricSpace$, and $d_G(u,v)$ is the distance
between $u$ and $v$ in the shortest path metric of $G$ (i.e.\ only traversing edges of $G$). Unless otherwise stated, we will assume that
$G$ is connected. For any edge $e = (u,v)\in E$, we require that
$w(e) = \MetricDist (u,v)$ (the weight of edges in the graph is
taken from the underlying metric space $\MetricSpace$).

\subsection{Basic definitions}

\begin{definition}
    For a pair of vertices $u,v \in V$ in a graph $G$ over an
    underlying metric space $\MetricSpace$, the \emphi{dilation} of
    the pair is
    \begin{equation*}
        s_G (u,v) = \frac{d_G (u,v)}{\MetricDist
           (u,v)}.
    \end{equation*}
\end{definition}

\begin{definition}
    The \emphi{average dilation} of a graph $G$ with $n$ vertices is
    \[
        s_{\avg} (G) \coloneqq \frac{1}{\binom{n}{2}} \sum_{u,v \in V}
        s_G(u,v).
    \]
\end{definition}

In this note, we study the following problem.

\begin{problem}
    \problab{min_avg_stretch} Given a positive integer $k$ and a graph
    $G = (V,E)$ over a metric space $\MetricSpace$, compute a set $F$
    of $k$ edges realizing
    \[
        \min_{\substack{\cardin{F} = k \\ F \subseteq V \times V}}
        s_{\avg} (G \cup F) = \min_{\substack{\cardin{F} = k \\ F
              \subseteq V \times V}} \frac{1}{\binom{n}{2}} \sum_{u,v
           \in V} s_{G \cup F} (u,v).
    \]
\end{problem}

\subsection{Additional definitions}

We state some definitions specific to our main result.

\begin{definition}
    Let $F$ be a set of edges used to augment some graph $G = (V,E)$,
    and let $V_F \subset V$ be the set of endpoints of edges of
    $F$. Define the function
    $\delta_{F} : V \times V \rightarrow (V_F \times V_F) \cup
    \emptyset$ such that, for vertices $a,b \in V$, $\delta_{F} (a,b)$
    returns a tuple $(v_a, v_b)$ where $v_a$ is the first visited
    endpoint of the first edge of $F$ in the shortest path between $a$
    and $b$ in $G \cup F$, and $v_b$ is the last visited endpoint of
    the last edge of $F$ in the same shortest path. If no edges from
    $F$ are visited in the shortest path, then
    $\delta_{F} (a,b) = \emptyset$. The quantity $\delta_{F} (a,b)$ is
    the \emphi{signature} of $(a,b)$ in $F$.
\end{definition}

The function $\delta_{F}$ has $O( \cardin{F}^2) =O(k^2)$ distinct
values it might return.

\begin{definition}
    With respect to the average dilation function $s_{\avg}$ and graph
    $G$, the \emphi{benefit} of a set $F$ is
    $B(F) = s_{\avg} (G) - s_{\avg}(G \cup F) = \sum_{u,v \in V} b_F
    (u,v)$, where $b_F (u,v) = s_{G} (u,v) - s_{G \cup F}
    (u,v)$. Similarly, the \emphi{restricted benefit} for $a,b \in V$
    is the quantity
    $B|_{(a,b)} (F) = \sum_{u,v \in V: \, \delta_{F} (u,v) = (a,b) }
    b_F (u,v)$.
\end{definition}

The restricted benefit is the contribution from a single signature to the total benefit in the optimal solution. From this, the following
observation is immediate.

\begin{observation}
    For some graph $G = (V,E)$, we let $P$ be the set of all pairs of
    distinct vertices in $V(F)$. Then, we have
    $B(F) = \sum_{(a,b) \in P} B|_{(a,b)} (F)$.
\end{observation}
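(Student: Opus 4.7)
The plan is to establish the identity by showing that the collection $\{S_{(a,b)}\}_{(a,b)\in P}$, where $S_{(a,b)} = \{(u,v)\in V\times V : \delta_F(u,v) = (a,b)\}$, together with $S_{\emptyset} = \{(u,v)\in V\times V : \delta_F(u,v) = \emptyset\}$, forms a partition of $V \times V$, and then to argue that pairs in $S_{\emptyset}$ contribute nothing to $B(F)$. The formula then follows by merely regrouping the defining sum.

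First I would note that $\delta_F$ is a well-defined function on $V \times V$: once ties in shortest paths are broken in a fixed canonical way, each pair $(u,v)$ has a single signature, so the sets $\{S_\sigma\}$ partition $V \times V$. Moreover, because no edge of $F$ is a self-loop, whenever $\delta_F(u,v)$ is non-empty, its two coordinates are distinct endpoints in $V_F$, so the signature lies in $P$. This already accounts for every term of $B(F)$ except those in $S_\emptyset$.

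Next I would handle the empty-signature case. If $\delta_F(u,v) = \emptyset$, then by definition the shortest path between $u$ and $v$ in $G \cup F$ uses no edge of $F$, and thus is a path entirely in $G$. Hence $d_{G \cup F}(u,v) = d_G(u,v)$, which gives $s_{G \cup F}(u,v) = s_G(u,v)$, so $b_F(u,v) = 0$. Consequently the entire block of terms corresponding to $S_\emptyset$ vanishes.

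Finally I would regroup the sum:
\[
    B(F) = \sum_{u,v \in V} b_F(u,v) = \sum_{(a,b) \in P} \sum_{(u,v) \in S_{(a,b)}} b_F(u,v) + \sum_{(u,v) \in S_\emptyset} b_F(u,v) = \sum_{(a,b) \in P} B|_{(a,b)}(F),
\]
where the last equality applies the definition of $B|_{(a,b)}(F)$ and the vanishing argument above. There is no real obstacle here; the only subtlety worth flagging is the need to verify that $S_\emptyset$ contributes zero, which is what lets the outer sum be taken over $P$ rather than $P \cup \{\emptyset\}$.
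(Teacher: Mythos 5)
Your proposal is correct and follows exactly the route the paper intends: the paper simply declares the observation ``immediate'' from the fact that the restricted benefit is the per-signature contribution, and your write-up fills in the two details that make this work --- the partition of $V\times V$ by signature and the vanishing of $b_F(u,v)$ when $\delta_F(u,v)=\emptyset$. The only nitpick is that your claim that the two coordinates of a non-empty signature are distinct really rests on shortest paths being simple (positive metric distances), not just on $F$ having no self-loops, but this does not affect the validity of the argument.
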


\subsection{Our contribution}

We analyze the greedy algorithm for placing $k$ additional edges. Assume the algorithm already computed a set of $i-1$ shortcut edges $F_{i-1}$.
In the $i$th iteration, the algorithm picks the edge $e_i$ that
maximizes $B(F_{i-1} \cup \brc{e_i})$. The algorithm then sets
$F_i = F_{i-1} \cup \brc{e_i}$ (here $F_0 = \emptyset$).

Our main result is that the output of this algorithm is an $O(k)$ 
approximation in the benefit for \probref{min_avg_stretch}.

\SaveContent{\MainThm}{
After $k$ steps, we have
    \begin{math}
        B(F_k) \geq B(F^*) / 8k. 
    \end{math}
    After $4k^2$ steps, we have
    \begin{math}
        B(F_{4k^2}) \geq B(F^*) / 2. 
    \end{math}
}

\begin{theorem}
    \thmlab{main}%
    \MainThm{}
\end{theorem}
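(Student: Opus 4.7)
The plan is to establish a greedy-style recursion: at each iteration $i$,
\[
B(F_i) - B(F_{i-1}) \;\geq\; \frac{B(F^*) - B(F_{i-1})}{4k^2}.
\]
Given this, setting $G_i = B(F^*) - B(F_i)$ gives $G_i \leq G_{i-1}(1 - 1/(4k^2))$, hence $G_t \leq B(F^*)(1 - 1/(4k^2))^t$. For $t = 4k^2$ this is at most $B(F^*)/e$, yielding $B(F_{4k^2}) \geq (1 - 1/e) B(F^*) \geq B(F^*)/2$. For $t = k$ it is at most $B(F^*) e^{-1/(4k)}$, and applying $1 - e^{-x} \geq x/2$ for $x \in [0,1]$ with $x = 1/(4k) \leq 1/4$ then yields $B(F_k) \geq B(F^*)/(8k)$.

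The heart of the argument, and what I expect to be the main technical obstacle, is the following pointwise lemma: for any set $F$ and any pair $(u,v)$ with $\delta_{F^*}(u,v) = (p,q) \neq \emptyset$,
\[
b_{F \cup \{(p,q)\}}(u,v) \;\geq\; b_{F^*}(u,v).
\]
The signature tells us that the shortest $u$-$v$ path in $G \cup F^*$ reaches $p$ and leaves from $q$ without touching any $F^*$ edge, so $d_{G \cup F^*}(u,v) = d_G(u,p) + d_{G \cup F^*}(p,q) + d_G(q,v) \geq d_G(u,p) + \MetricDist(p,q) + d_G(q,v)$, the final step being the metric triangle inequality (and the assumption that edge weights agree with $\MetricDist$). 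Meanwhile, the explicit path $u \to p \to q \to v$ using $G$-geodesics together with the new shortcut shows $d_{G \cup F \cup \{(p,q)\}}(u,v) \leq d_G(u,p) + \MetricDist(p,q) + d_G(q,v)$. Dividing by $\MetricDist(u,v)$ and recalling $b_F = s_G - s_{G \cup F}$ then gives the claim.

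With the pointwise lemma in hand, the recursion follows by double counting. Summing the lemma over pairs $(u,v)$ with $\delta_{F^*}(u,v) = (p,q)$ and using monotonicity of $b_F$ on the remaining pairs, the marginal of adding just the edge $(p,q)$ to $F_{i-1}$ is at least $B|_{(p,q)}(F^*) - \sum_{(u,v):\, \delta_{F^*}(u,v) = (p,q)} b_{F_{i-1}}(u,v)$. Summing this bound over all $(p,q) \in V_{F^*} \times V_{F^*}$---at most $(2k)^2 = 4k^2$ terms---the first part sums to exactly $B(F^*)$ (pairs with empty signature contribute $0$ to $B(F^*)$) and the second part sums to at most $B(F_{i-1})$. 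Pigeonholing now identifies some $(p^*,q^*)$ whose edge attains marginal at least $(B(F^*) - B(F_{i-1}))/(4k^2)$, and since greedy picks the globally best edge the same bound holds for $B(F_i) - B(F_{i-1})$, completing the recursion and hence both claims.
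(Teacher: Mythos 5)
Your proof is correct, and its combinatorial core---decomposing the optimal benefit by signature, observing that a single edge $(p,q)$ realizes at least the restricted benefit $B|_{(p,q)}(F^*)$, and pigeonholing over at most $4k^2$ signatures---is the same as the paper's. The differences are worth noting. First, you make explicit the pointwise inequality $b_{F \cup \{(p,q)\}}(u,v) \geq b_{F^*}(u,v)$ for pairs with $\delta_{F^*}(u,v) = (p,q)$, via the decomposition $d_{G\cup F^*}(u,v) \geq d_G(u,p) + \MetricDist(p,q) + d_G(q,v)$ and the explicit shortcut path; the paper leaves this step implicit inside the one-line assertion that some signature ``must have benefit'' transferable to a single edge, so your version fills in what is arguably the only real technical content of the argument. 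Second, you derive the unconditional recursion $B(F_i) - B(F_{i-1}) \geq \bigl(B(F^*) - B(F_{i-1})\bigr)/(4k^2)$, whereas the paper proves the weaker dichotomy (either $B(S) \geq B(F^*)/2$ or the gain is at least $B(F^*)/(8k^2)$) and telescopes linearly. Your recursion is the standard submodular-greedy form: it yields the $4k^2$-step claim via $(1-1/(4k^2))^{4k^2} \leq 1/e$ and the $k$-step claim via $1 - e^{-x} \geq x/2$ on $[0,1]$, both of which check out, and it avoids the case analysis the paper's telescoping quietly relies on (namely, that once $B(F_i) \geq B(F^*)/2$ the conclusion already holds). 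The paper's route is shorter to state; yours is more robust and would also give the $(1-1/e)$-type guarantee if one ran greedy longer.
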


\section{Related work}

The maximum dilation variant of \probref{min_avg_stretch} was
introduced by Farshi \etal \cite{fgg-fbsgn-05} and studied for
$k = 1$.  Extending these results for the case of $k > 1$ was posed as
an open problem in \cite[Open Problem 9]{ns-gsn-07}. This was answered
by Gudmundsson \etal \cite{DBLP:conf/soda/GudmundssonW21}, who gave an
$O(k)$ approximation algorithm.

\section{Approximation ratio of greedy algorithm}

\begin{lemma}
    \lemlab{key}
    Let $S$ be any set of shortcut edges, and $F^*$ be the optimal set
    of $k$ shortcuts. Then, either $B(S) \geq B(F^*)/2$, or there
    exists an edge $e$, such that $B(S + e ) \geq B(S) +B(F^*)/8k^2$.
\end{lemma}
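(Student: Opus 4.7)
The plan is to identify a small pool of candidate edges---namely the edges $(a,b)$ with $a,b \in V_{F^*}$---and argue via pigeonhole that at least one of them gives a substantial marginal gain when added to $S$. Since $\cardin{V_{F^*}} \leq 2k$, the pool $P$ has size at most $4k^2$.

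The heart of the proof is a single geometric observation. Fix $(a,b) \in P$ and consider adding the edge $e = (a,b)$ to $S$. For any pair $(u,v)$ with signature $\delta_{F^*}(u,v) = (a,b)$, the definition of the signature forces the shortest $u$-to-$v$ path in $G \cup F^*$ to have the form $u \to a \to \cdots \to b \to v$, where the $u \to a$ and $b \to v$ portions use only edges of $G$. Hence this path has length $d_G(u,a) + d_{G \cup F^*}(a,b) + d_G(b,v)$, and the middle term is at least $\MetricDist(a,b)$ since any path in a metric graph has length at least the metric distance between its endpoints. The direct route $u \to a \to b \to v$ through the single shortcut $e$ therefore has length at most $d_{G \cup F^*}(u,v)$, giving
\[
    s_{G \cup S + e}(u,v) \;\leq\; s_{G \cup F^*}(u,v).
\]

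Combining this with the trivial monotonicity $s_{G \cup S + e}(u,v) \leq s_{G \cup S}(u,v)$ (adding an edge never increases distances), the marginal gain satisfies
\[
    B(S+e) - B(S) \;\geq\; \sum_{(u,v):\,\delta_{F^*}(u,v)=(a,b)} \pth{b_{F^*}(u,v) - b_S(u,v)}.
\]
Summing over $(a,b) \in P$ counts each pair with a nonempty $F^*$-signature exactly once, while pairs with empty signature contribute $b_{F^*}(u,v) = 0$. Together with $b_S \geq 0$, this yields $\sum_{(a,b)\in P} \pbrcx{B(S + (a,b)) - B(S)} \geq B(F^*) - B(S)$. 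Pigeonhole on the $\leq 4k^2$ candidate edges then produces one with marginal gain at least $(B(F^*) - B(S))/(4k^2)$. The dichotomy of the lemma is immediate: either $B(S) \geq B(F^*)/2$, or else $B(F^*) - B(S) > B(F^*)/2$ and the selected edge achieves $B(S+e) \geq B(S) + B(F^*)/(8k^2)$.

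The main obstacle is the path-decomposition argument justifying $s_{G \cup S + e} \leq s_{G \cup F^*}$ on pairs with signature $(a,b)$. Once this is in hand---essentially one application of the triangle inequality to the middle segment of the shortest path---the remaining steps are routine summation and pigeonhole.
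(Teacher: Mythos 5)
Your proof is correct and follows essentially the same route as the paper: restrict attention to the at most $4k^2$ candidate edges spanned by the endpoints of $F^*$, charge the benefit of $F^*$ to signatures, and apply pigeonhole to extract one edge with marginal gain at least $(B(F^*)-B(S))/(4k^2)$. In fact your write-up makes explicit the triangle-inequality step (that adding the single edge $(a,b)$ captures the full restricted benefit of the signature $(a,b)$) which the paper's proof leaves implicit, so no changes are needed.
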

\begin{proof}
    Assume $B(S) < B(F^*)/2$.  Observe that
    $B(S \cup F^*) \geq B(F^*)$. The are $k = |F^*|$ edges, and they
    have $2k$ endpoints, and thus overall there are
    $2k(2k-1)+1 \leq 4k^2$ signatures. One of the signatures, in
    relation to $S$, must have benefit $\geq (B(S \cup F^*) -B(S)) / (4k^2)$,
    as the total benefits of all the signatures exceeds $B(S \cup
    F^*)$. Thus, for the edge $e$ maximizing $B(S \cup e)$, we have
    \begin{equation*}
        B(S\cup e)
        \geq
        B(S) + \frac{B(S\cup F^*) - B(S)}{ 4k^2}
        \geq%
        B(S) + \frac{B(F^*) - B(F^*)/2}{ 4k^2}
        \geq%
        B(S) + \frac{B(F^*) }{8k^2}.
    \end{equation*}
\end{proof}

\RestatementOf{\thmref{main}}{\MainThm}

\begin{proof}
    After $k$ steps of the greedy algorithm, we have by \lemref{key} that 
    \begin{equation*}
        B(F_k) \geq B(F_{k-1}) +
        \frac{B(F^*) }{8k^2}
        \geq \cdots
        \geq
        B(F_0) + k \frac{B(F^*)}{8k^2}
        =
        0 + \frac{B(F^*)}{8k}.
    \end{equation*}
    The second part follows by similar argumentation.
\end{proof}

\bibliographystyle{alpha}%
\bibliography{min_stretch}

\end{document}